\newtheorem{prop}{Proposition}
\newcommand{\site}{\boldsymbol{s}}
\begin{document}

\title{Radio Resource Dimensioning with Cox Process Based User Location Distribution}
\author{Ridha Nasri, 
	Jalal Rachad and
	Laurent Decreusefond
	
	\thanks{Ridha Nasri and Jalal Rachad are with Orange Labs, 40-48 Avenue de la R\'epublique, 92320 Ch\^atillon, France. Emails: \{ridha.nasri,jalal.rachad\}@orange.com}
	\thanks{Jalal Rachad and Laurent Decreusefond are with LTCI Telecom Paris-Tech, universit\'e Paris Saclay, 23 Avenue d'Italie,75013, Paris, France. Emails: jalal.rachad@telecom-paristech.fr, laurent.decreusefond@mines-telecom.fr}
}


\maketitle

\begin{abstract}
The upcoming fifth generation (5G) New Radio (NR) interface inherits many concepts and techniques from 4G systems such as the Orthogonal Frequency Division Multiplex (OFDM) based waveform and multiple access. Dimensioning 5G NR interface will likely follow the same principles as in 4G networks. It aims at finding the number of radio resources required to carry a forecast data traffic at a target users Quality of Services (QoS). The present paper attempts to provide a new approach of radio resources dimensioning considering the congestion probability, qualified as a relevant metric for QoS evaluation. We distinguish between the spatial random distribution of indoor users, modeled by a spatial Poisson Point Process (spatial PPP) in a typical area covered by a 5G cell, and the distribution of outdoor users modeled by a linear PPP generated in a random system of roads modeled according to a Poisson Line Process (PLP). Moreover, we show that the total requested Physical Resource Blocks (PRBs) follows a compound Poisson distribution and we attempt to derive the explicit expression of the congestion probability by introducing a mathematical tool from combinatorial analysis called the exponential Bell polynomials. Finally we show how to dimension radio resources, for a given target congestion probability, by solving an implicit relation between the necessary resources and the forecast data traffic expressed in terms of cell throughput. Different numerical results are presented to justify this dimensioning approach.   
   
\end{abstract}

\begin{IEEEkeywords}
5G New Radio, Dimensioning, Congestion probability, Poisson Line Process, Poisson Point Process, Indoor, Outdoor, Bell Polynomials.
\end{IEEEkeywords}

\IEEEpeerreviewmaketitle

\section{Introduction}

\IEEEPARstart{R}{adio} dimensioning consists in assessing the network resources required to carry a predicted data traffic with a satisfactory QoS. This later is often summarized in some metrics such as the average user throughput or the target congestion probability. In contrast with some recent works, where the dimensioning exercise is performed to satisfy a minimum user throughput in the cell, we use the cell congestion as the target QoS, instead. Besides, dimensioning is performed assuming mobile users distributed in roads or located in buildings. The first kind of users are modeled by Cox point process driven by PLP whereas the second kind is described by a spatial PPP. Such mobile users are granted some radio resources, called Physical Resource Blocks (PRB), at each Time Transmit Interval (TTI) and according to a predefined scheduling algorithm. The choice of the scheduling algorithm is mainly related to the fairness level made between users, i.e., the way that resources are allocated to users according to their channel qualities and their priorities, defined by the operator \cite{trivedi2014comparison,sadr2009radio,yaacoub2012survey}.

\subsection{Related works}

Dimensioning approaches, resource allocation and scheduling algorithms have been widely addressed in literature for OFDMA access technology; see for instance \cite{shen2005adaptive,sadr2009radio,yaacoub2012survey,el2011dynamic,el2012stable,agarwal2008low,khattab2006opportunistic,decreusefond2012robust,blaszczyszyn2009dimensioning}. In \cite{shen2005adaptive}, an adaptive resource allocation for multiuser OFDM system, with a set of proportional fairness constraints guaranteeing the required data rate, has been discussed. Similarly, authors in \cite{sadr2009radio} surveyed different adaptive resource allocation algorithms and provided a comparison between them in terms of performance and complexity. Furthermore, OFDMA dimensioning has been always considered as a hard task because of the presence of elastic data services. It was provided in \cite{decreusefond2012robust} an analytical model for dimensioning OFDMA based networks with proportional fairness in resource allocation between users requiring different transmission rates. For a Poisson distribution of mobile users, authors in \cite{decreusefond2012robust} showed that the required number of resources in a typical cell follows a compound Poisson distribution. In addition, an upper bound of the blocking probability was given. Likewise in \cite{blaszczyszyn2009dimensioning}, authors have proposed a Downlink OFDMA dimensioning approach considering an Erlang's loss model and Kaufman-Roberts algorithm to evaluate the blocking probability. Also in \cite{karray2010analytical}, it has been proposed an analytical method to evaluate the QoS for Downlink OFDMA system considering real-time and elastic traffic with a dimensioning approach illustration.\\

Additionally, Different models for network geometry and user distributions can be found in \cite{nasri2016analytical, andrews2011tractable, chetlur2018coverage, choi2018analytical}. Stochastic geometry is a strong mathematical tool to model the spatial randomness of wireless communication and also the random tessellations of roads. In particular, authors in \cite{chetlur2018coverage} and \cite{choi2018analytical} considered vehicular-type communication systems where the transmitting and receiving nodes are distributed along roads and modeled by a linear PPP, while roads random tessellations are modeled by a PLP, i.e., the process of nodes is doubly stochastic. Such a model is known as Cox point process driven by PLP. Others models have been proposed in literature such as Manhatan model that uses a grid of horizontal and vertical streets, Poisson Voronoi Tessellations (PVT) and Poisson Delaunay Tessellations (PDT) \cite{wang2018mmwave,gloaguen2010analysis,voss2009distributional}. Manhattan model does not fit the irregularity of roads in urban and dense urban environment, while PVT and PDT could not lead to explicit analytical result. It seems that Cox point process driven by PLP is a relevant model for roads in urban environment that is gaining popularity recently and merits investigations when looking for performance analysis and dimensioning problems of wireless cellular communications.
 

\subsection{Contribution}

Compared to the existing works, the main contributions of this paper are:

\begin{description}
	
	\item[$\bullet$] We provide an analytical model to dimension OFDM based systems with a proportional fair resources' allocation policy. This dimensioning model is very useful for operators because it gives a vision on how they should manage the available spectrum. If the dimensioned number of resources exceeds the available one, the operator can, for instance,  aggregate fragmented spectrum resources into a single wider band in order to increase the available PRBs, or activate capacity improvement features like dual connectivity between 5G and legacy 4G networks, in order to delay investment on the acquisition of new spectrum bands. Moreover, the proposed model can be applied to the scalable OFDM based 5G NR with different subcarriers' spacing in order to enable different types of deployments and network topologies and support different use cases.\\
	\item[$\bullet$] Instead of considering only the random distribution of users in the cell often modeled by a spatial PPP, we consider two types of users: \textit{i)} indoor users distributed in buildings and modeled by a spatial PPP. \textit{ii)} for outdoor users (e.g., pedestrians or vehicular), we characterize at first the random distribution of roads in a typical cell coverage area by a PLP and then we consider the random distribution of users in this system of roads according to a linear PPP. This model allows the operator to evaluate and compare performances between outdoor and indoor environments in terms of required radio resources.\\ 
	\item[$\bullet$] We show that the total number of the requested PRBs follows a compound Poisson distribution and we derive the explicit formula of the congestion probability as a function of different system parameters by using a mathematical tool from combinatorial analysis called the exponential Bell polynomials. This metric is defined as the risk that the requested resources exceed the available ones. It is often considered primordial for operators when it comes to resources dimensioning since it is related to the guaranteed quality of service. Then by setting a target congestion probability, we show how to dimension the number of PRBs given a forecast cell throughput. To the best of our knowledge, the explicit formula of the congestion probability has never been derived in similar studies.

 \end{description}


\subsection{Paper organization}

The rest of this paper is organized as follows: In Section II, system models, including a short description of Poisson Line Process, are provided. Section III characterizes the proposed dimensioning model and provides an explicit expression of the congestion probability and an implicit relation between the number of required resources and the cell throughput. Numerical results are provided in Section IV. Section V concludes the paper.

\section{System model and notations}

Cellular networks modeling is often related to the network geometry, the shape of the cell, the association between cells and users and of course their spatial distributions. This latter is related to the geometry of the city where the studied cell area exists. The Geometry of the city, in turn, is linked to the spatial distribution of roads and buildings. Indoor users, which are distributed in buildings, are often modeled by a spatial PPP in $\mathbb{R}^2$. However, outdoor users (e.g., pedestrians or vehicular) are always distributed along roads. As we mentioned in the introduction, many models have been proposed in literature to model the spatial distribution of roads, such as Manhattan model, PVT, PDT and Poisson Line Process. In this work, we consider a combination of indoor and outdoor users in the studied cell area. Indoor users are distributed according to  spatial PPP and outdoor users are distributed along a random system of roads according to a Cox Point Process driven by PLP.

\subsection{Indoor users model}

 A PPP in $\mathbb{R}^2$ with intensity $\zeta$ is a point process that satisfies: \textit{i}) the number of points inside every bounded closed set $B \in \mathbb{R}^2$ follows a Poisson distribution with mean $\zeta|B|$, where $|B|$ is the Lebesgue measure on $\mathbb{R}^2$; \textit{ii}) the number of points inside any disjoint sets of $\mathbb{R}^2$ are independent \cite{nasri2015tractable}. Actually, spatial PPP has been widely used to model BSs and users locations in cellular network. In this work, indoor users are considered to be distributed in buildings according to a spatial PPP $\varphi$ of intensity $\kappa$, which means that their locations are uniformly distributed in the studied cell coverage area and their number follows a Poisson distribution. 

\subsection{Outdoor users model} 

As we mentioned previously, outdoor users are considered to be distributed along a random system of roads. To model the random tessellation of roads, we consider the so-called PLP which is mathematically derived from the spatial PPP. Instead of points, the PLP is a random process of lines distributed in the plane $\mathbb{R}^2$. Each line in $\mathbb{R}^2$ is parametrized in terms of polar coordinates ($r$,$\theta$) obtained from the orthogonal projection of the origin on that line, with $r$ $\in$ $\mathbb{R^+}$ and $\theta \in (-\pi, \pi]$. Now we can consider an application $T$ that maps each line to a unique couple ($r$,$\theta$), generated by a PPP in the half-cylinder $\mathbb{R^+} \times (-\pi,\pi]$; Fig. \ref{plp1}. The distribution of lines in $\mathbb{R}^2$ is the same as points' distribution in this half-cylinder; see \cite{chetlur2018coverage} and \cite{choi2017analytical} for more details.\\

In the sequel, we assume that roads are modeled by a PLP $\phi$ with roads' intensity denoted by $\lambda$. The number of roads that lie inside a disk $\site$ of radius $R$ is a Poisson random variable, denoted by $Y$. It corresponds to the number of points of the equivalent spatial PPP in the half-cylinder $[0,R] \times (-\pi,\pi]$ having an area of $2\pi R$. Hence, the expected number of roads that lie inside $\site$ is $\mathbb{E}(Y)=2 \pi \lambda R$. Then, conditionally on $\phi$ (i.e., conditionally on roads), outdoor users are assumed to be distributed on each road according to independent linear PPPs having the same intensity $\delta$. This model is known as Cox point process. The mean number of users on a given road $j$ is $\delta L_j$, with $L_j$ is the length of road $j$. Besides, the number of roads that lie between two disks of radius $R_1$ and $R_2$ respectively, with $R_1 \leqslant R_2$, is $2\pi \lambda (R_2-R1)$. Also, the number of distributed users in a road, parametrized by ($r$,$\theta$) and delimited by the two disks, is $2\delta (\sqrt{R_2^2 - r^2}-\sqrt{R_1^2 - r^2})$. Additionally, the average number of outdoor users in the disk of radius $R$ can be calculated using the equivalent homogeneous spatial PPP with intensity $\lambda \delta$ in the disk area.
For illustration, Fig. \ref{plp1} presents the line parametrization described above and Fig. \ref{plp2} shows a realization of a Cox Point Process driven by PLP.\\  

\begin{figure}[tb]
	\centering
	\includegraphics[width=0.45\textwidth]{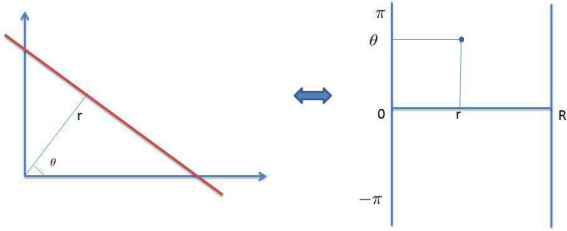}
	\caption{Line parametrization.}
	\label{plp1}
\end{figure}

\begin{figure}[tb]
	\centering
	\includegraphics[width=0.47\textwidth]{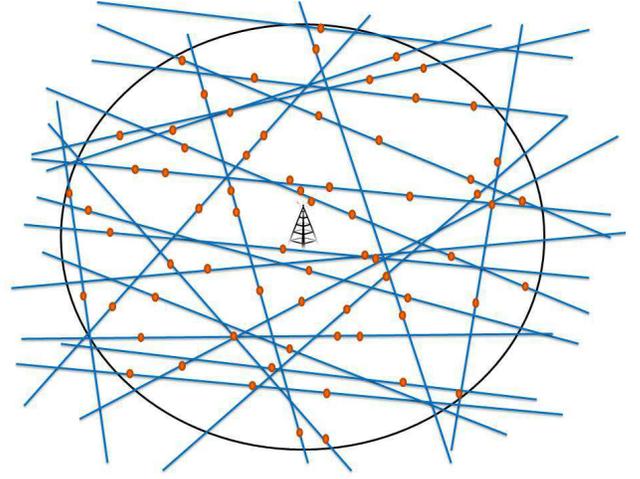}
	\caption{A realization of Cox Point Process driven by PLP.}
	\label{plp2}
\end{figure} 

 Additionally, we assume that outdoor and indoor users processes are independent and they form respectively two processes with intensities $\lambda \delta$ and $\kappa$. Therefore, the average number of users (outdoor and indoor), denoted by $u$, inside the cell coverage area can be calculated by 
 
\begin{equation}
u= ( \lambda \delta + \kappa) \pi R^2.
\label{nbrusers}
\end{equation}

Table \ref{tab:table1} summarizes the basic notations used in the article.

\begin{table}[h!]
	\begin{center}

		\begin{tabular}{l|c|r} 
			
			\textbf{Symbols} & \textbf{Definition} \\

			\hline
			$\varphi$ & spatial PPP of indoor users with intensity $\kappa$\\
			
			\hline
			$\phi$ & PLP of roads with intensity $\lambda$ \\
			
			\hline
			$\delta$ & The linear PPP intensity on each road (outdoor) \\
			
			\hline
			$r_j$ & The short distance between a road j and the origin \\
			
			\hline
			$Y$ &  number of roads that lie inside $\site$ \\
			
			\hline
			$\delta \lambda$ & Spatial PPP intensity on half-cylinder $[0,R] \times (-\pi,\pi]$  \\
			\hline
			$u$ & Average number of users in $\site$  \\
			\hline
		
		\end{tabular}
	\caption{Notations.}
	\label{tab:table1}
	\end{center}
\end{table}

\subsection{Network model}

We consider a circular cell $\site$ of radius $R$ with a base station (BS), denoted also $\site$ and positioned at its center, transmitting with a power level $P$. The received power by a user located at distance $x$ from $\site$ is $Px^{-2b}/a$, where $2b$ is the path loss exponent and $a$ is a propagation parameter that depends on the type of the environment (outdoor, indoor). We assume that BS $\site$ allocates PRBs to its users at every TTI (e.g., 1 ms). Each PRB has a bandwidth denoted by $W$ (e.g., $W=$180kHz for scalable OFDM with subcarriers spacing of 15kHz).\\

Active users in the cell compete to have access to the available dimensioned PRBs. Their number is denoted by $M$. The BS allocates a given number $n$ of PRBs to a given user depending on: \textit{i}) the class of services he belongs to (i.e.,the transmission rate he requires) and \textit{ii}) his position in the cell (i.e., the perceived radio conditions). Without loss of generality, we assume that there is just one class of service with a required transmission rate denoted by $C^*$.\\

A user located at distance $x$ from $\site$ decodes the signal only if the metric \textquotedblleft Signal to Interference plus Noise Ratio (SINR)\textquotedblright $~\Theta(x)= \frac{P x^{-2b}/a}{I+\sigma^2}$ is above a threshold $\Theta^*=\Theta(R)$, where $I$ is the received co-channel interference and $\sigma^2$ is the thermal noise power. For performance analysis purpose, SINR $\Theta(x)$ is often mapped to the user throughput by a link level curve. To simplify calculation, we use hereafter the upper bound of the well known Shannon's formula for MIMO system $Tx \times Rx $, with $Tx$ and $Rx$ are respectively the number of transmit and receive antennas. Hence, the throughput of a user located at distance $x$ from $\site$ is   

\begin{equation}
C(x)=\vartheta W\log_2\left(1+\Theta(x)\right), 
\label{Shannonmimo}
\end{equation}
with $\vartheta= min(Tx,Rx)$.\\

Then, the number of PRBs required by a user located at distance $x$ from $\site$ is
\begin{equation}
n(x)=\lceil \frac{C^*}{C(x)} \rceil \leq N,
\label{nbrprb}
\end{equation}
where $N= min(N_{max} , \lceil C^*/(\vartheta Wlog_2(1+\Theta^*))\rceil)$, $N_{max}$ is the maximum number of PRBs that a BS can allocate to a user (fixed by the operator) and $\lceil . \rceil$ stands for the Ceiling function.\\

It is obvious from (\ref{nbrprb}) that users are fairly scheduled because a user with bad radio conditions (with low value of $C(x)$) gets higher number of PRBs to achieve its transmission rate $C^*$.\\

Let $d_n$ be the distance from $\site$ that verifies, for all $x \in (d_{n-1},d_n]$, $n(x)=n$, with

\begin{equation}
n= \frac{C^*}{C(d_n)}  
\label{nbrprb2}
\end{equation}
is an integer and\\ 
\[
d_n=\left\{
\begin{array}{ll}
0 \ \mbox{if $n=0$,}  \\ 

%
%
\left[\frac{a(I+\sigma^2)}{P}(2^{\frac{C^*}{n \vartheta W}}-1)\right]^{\frac{-1}{2 b}}\ \mbox{otherwise,}\\
\\
\end{array}
\right.
\]

From (\ref{nbrprb2}), the cell $\site$ area can be divided into rings with radius $d_n$ such that for $1\leqslant n\leqslant N,~  0\leqslant d_{n-1}<d_n\leqslant R$. The area between the ring of radius $d_n$ and the ring of radius $d_{n-1}$ characterizes the region of the cell where users require $n$ PRBs to achieve the transmission rate $C^*$. Given that $d_n$ depends on the propagation parameter, it is worth to mention that there is a difference between $d_n$ values for outdoor and indoor environments. Thus to avoid confusion, we denote in the remainder, for indoor environment, the ring radius by $\tilde{d}_n$ and the propagation parameter by $\tilde{a}$. Finally, we define the cell throughput by the sum over all transmission rates of users:    
\begin{equation}
\tau= u C^*,
\label{cellth}
\end{equation}
with $u$ is recalled the average number of users inside $\site$ and expressed by (\ref{nbrusers}).\\

On the other hand, inter-cell interference is one of the main factors that compromise cellular network performance. The analysis of this factor level go through the SINR evaluation that depends on the geometry of the network as well as the distribution of users' locations. The analytical random models that can be found in literature, such as Homogeneous PPP, assume that BSs are randomly distributed according to a spatial point process. Thus, it becomes hard to estimate the interference level in each user location and only its distribution is determined; see for instance \cite{andrews2011tractable}.\\       

Besides, interference level estimation is of utmost importance in link adaptation procedure. In practical systems, the SINR is mapped to an indicator called Channel Quality Indicator (CQI) (e.g., 15 CQI indexes for LTE). This indicator is used by the BSs to determine the modulation and coding schemes (MCS) and consequently the transmission rate. Actually, the level of interference varies from one location to another in the same cell. Practically cell edge users experience high interference level compared to users that are close to the BS in the cell middle or cell center. To this purpose, one can consider three range of CQI indexes with a constant interference level for each range. The first range (i.e., low CQI indexes) stands for bad channel quality with high interference level, the second range (i.e., medium CQI indexes) stands for low interference level and the last range (i.e., high CQI indexes) refers to good channel quality with a negligible interference level.\\

 Furthermore, when interference level is non negligible, we use the notion of interference margin (IM) or Noise Rise in link budget. IM is defined as the increase in the thermal noise level caused by other-cell interference and it can be expressed in the linear scale as

\begin{equation}
IM=\frac{I+\sigma^2}{\sigma^2}
\label{NR}
\end{equation}                   

In the remainder of this study, we evaluate interference level by using three margins for each region of the studied cell. We consider three regions in $\site$ coverage area: the cell center that stands for the disk having a radius of $\frac{R}{3}$, the cell middle that represents the region between the disk B(0,$\frac{R}{3}$) and the disk B(0,$\frac{2R}{3}$). Finally, the cell edge refers to the region of the cell where the distance to $\site$ is above $\frac{2R}{3}$.\\

\section{Presentation of the dimensioning approach}

Dimensioning process consists in evaluating the required radio resources that allow to carry a forecast data traffic given a target QoS. The QoS can be measured by the congestion probability metric or even by a target average user throughput. The present approach assesses the congestion probability as a function of many key parameters, in particular, the number of PRBs $M$ and the cell throughput $\tau$. To characterize this congestion probability, we need to evaluate the total requested PRBs by all users. In the remainder of this section, we will state some analytical results regarding the explicit expression of the congestion probability under the system model presented in the previous sections.\\

\subsection{Qualification of the total number of requested PRBs}

As we have mentioned in section II, outdoor users are distributed along each road $L_j$ according to a linear PPP of intensity $\delta$. Now, if we consider a disk B(0,$d_n$) of radius $d_n$, the number of users in the portion of $L_j$ that lies inside B(0,$d_n$) is a Poisson random variable (this comes from the definition of the linear PPP) with mean $2 \delta \sqrt{d_n^2-r_j^2}$ (Pythagoras' theorem). Hence, conditionally on $\phi$, the mean number of users $\alpha_n(Y)$ inside B(0,$d_n$) is the sum over all roads $L_j$ that intersect with B(0,$d_n$) and it can be expressed by    

\begin{equation}
\alpha_{n}(Y)=2 \delta \sum_{j=1}^{Y}\mathds{1}_{(d_n>r_j)}\sqrt{d_n^2-r_j^2}.
\label{alphan}
\end{equation}

Moreover, the number of users in the portion of $L_j$ that lies between two rings B(0,$d_n$) and B(0,$d_{n-1}$) is also a Poisson random variable with parameter (i.e, the mean number of users)  $2\delta (\sqrt{d_n^2 - r_j^2}-\sqrt{d_{n-1}^2 - r_j^2})$. Finally, the mean number of users $\mu_n(Y)$ in all the roads that lie between the rings B(0,$d_n$) and B(0,$d_{n-1}$) can be expressed by

\begin{equation}
\mu_n(Y)=\alpha_n(Y) - \alpha_{n-1}(Y).
\end{equation}

 Similarly, the mean number of indoor users, that are distributed according to a spatial PPP of intensity $\kappa$ can be expressed by 
 
 \begin{equation}
 \tilde{\mu}_n= \kappa \pi (\tilde{d}_n^2 - \tilde{d}_{n-1}^2).
 \end{equation}

To qualify the number of requested PRBs by outdoor and indoor users, we consider two independent Poisson random variables denoted respectively by $X_n$ and $\tilde{X}_n$ with parameters $\mu_n(Y)$ and $\tilde{\mu}_n$. $X_n$ and $\tilde{X}_n$ represent the number of users (outdoor and indoor) that request $n$ PRBs with $1\leq n\leq N$.\\

%
%

Finally, we define the total number of requested PRBs in the cell as the sum of demanded PRBs by outdoor and indoor users in each ring. It can be expressed as

\begin{equation}
\Gamma= \mathcal{F} + \tilde{\mathcal{F}},    
\label{prbtot}
\end{equation}
where $\mathcal{F}=\sum_{n=1}^{N}nX_n$ and $\tilde{\mathcal{F}}=\sum_{n=1}^{N}n\tilde{X_n}$ are the total demanded PRBs by outdoor and indoor users respectively.\\

 The random variable $\Gamma$ is the sum of weighted Poisson random variables and it is called compound Poisson sum. The evaluation of its distribution requires extensive numerical simulation. It is important to mention that the parameter $\mu_n$ of $X_n$ depends on $Y$, which is a Poisson random variable. Hence all calculations should be done conditionally on $\phi$. The following proposition gives the explicit expression of the first-order moment (i.e., the mathematical expectation) of $\Gamma$.
 
 \begin{prop}
 	
 Let $\Gamma$ be a compound Poisson sum as in (\ref{prbtot}). Let $\phi$ be a PLP defined as in section II.A with $Y$ is the Poisson random variable that represents the number of roads that lie inside $\site$ coverage area.
 
 The first-order moment of $\Gamma$ is given by
 
 \begin{equation}
 \mathbb{E}(\Gamma)=\frac{4 \delta \omega}{3 R} \sum_{n=1}^{N}n \frac{d_n^3-d_{n-1}^3}{R} + \kappa \pi \sum_{n=1}^{N}n (\tilde{d}_n^2 - \tilde{d}_{n-1}^2),
 \end{equation}  	
 with  $\omega = 2 \pi \lambda R$ is the mathematical expectation of $Y$.	
 \label{meangamma}	
 \end{prop}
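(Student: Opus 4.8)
The idea is to reduce the statement to two elementary first--moment computations by combining linearity of expectation with the tower property; only the outdoor contribution requires genuine work, namely an averaging over the Poisson line process $\phi$.

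First I would write $\mathbb{E}(\Gamma)=\mathbb{E}(\mathcal F)+\mathbb{E}(\tilde{\mathcal F})$ by linearity, with $\mathcal F=\sum_{n=1}^N nX_n$ and $\tilde{\mathcal F}=\sum_{n=1}^N n\tilde X_n$; observe that no independence between the $X_n$'s or between $\mathcal F$ and $\tilde{\mathcal F}$ is needed at the level of the mean. The indoor term is immediate: $\tilde X_n$ is Poisson with parameter $\tilde\mu_n=\kappa\pi(\tilde d_n^2-\tilde d_{n-1}^2)$, so $\mathbb{E}(\tilde X_n)=\tilde\mu_n$ and therefore $\mathbb{E}(\tilde{\mathcal F})=\sum_{n=1}^N n\tilde\mu_n=\kappa\pi\sum_{n=1}^N n(\tilde d_n^2-\tilde d_{n-1}^2)$, which is precisely the second summand of the claimed identity.

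For the outdoor term I would condition on $\phi$, that is, on $Y$ together with the road--origin distances $r_1,\dots,r_Y$ of the roads lying in $\site$. Conditionally on $\phi$, $X_n$ is Poisson with parameter $\mu_n(Y)=\alpha_n(Y)-\alpha_{n-1}(Y)$, hence $\mathbb{E}(\mathcal F\mid\phi)=\sum_{n=1}^N n\big(\alpha_n(Y)-\alpha_{n-1}(Y)\big)$ and, taking expectations,
\begin{equation*}
\mathbb{E}(\mathcal F)=\sum_{n=1}^N n\Big(\mathbb{E}\big[\alpha_n(Y)\big]-\mathbb{E}\big[\alpha_{n-1}(Y)\big]\Big),\qquad \alpha_n(Y)=2\delta\sum_{j=1}^Y\mathds{1}_{(d_n>r_j)}\sqrt{d_n^2-r_j^2}.
\end{equation*}
So everything reduces to computing the single quantity $\mathbb{E}[\alpha_n(Y)]$ and then substituting it into this sum (recall $d_0=0$, so $\alpha_0(Y)=0$).

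This averaging over $\phi$ is the crux of the argument. The number of roads $Y$ inside $\site$ is Poisson with mean $\omega=2\pi\lambda R$, and conditionally on $Y$ the distances $r_j$ are i.i.d.\ with the density of the distance to the centre of $\site$; hence a Campbell--type identity collapses the random--length sum, $\mathbb{E}\big[\sum_{j=1}^Y f(r_j)\big]=\mathbb{E}[Y]\,\mathbb{E}[f(r_1)]$, so that $\mathbb{E}[\alpha_n(Y)]=2\delta\omega\int_0^{d_n}\sqrt{d_n^2-r^2}\,f_{r_j}(r)\,dr$. This last integral is elementary --- the substitution $u=d_n^2-r^2$ does it --- and yields $\mathbb{E}[\alpha_n(Y)]=\tfrac{4\delta\omega}{3R^2}d_n^3$; inserting this into the sum above and adding the indoor term gives the stated formula. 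The one place where care is required is exactly this step: one must exploit the doubly stochastic structure correctly, i.e.\ pin down the law of $r_j$ on $[0,R]$ and justify exchanging the expectation with a sum whose number of terms is itself random; once that is settled, the remaining manipulations are routine.
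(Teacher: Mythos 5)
Your proposal is correct and follows essentially the same route as the paper's proof: linearity plus conditioning on $\phi$, Wald's identity $\mathbb{E}\bigl[\sum_{j=1}^{Y}f(r_j)\bigr]=\omega\,\mathbb{E}[f(r_1)]$ to collapse the random sum, and the elementary integral $\int_0^{d_n}\sqrt{d_n^2-r^2}\,r\,dr=d_n^3/3$. The one point you flag as needing care is resolved in the paper by taking $r_j$ to have the radial density $2r/R^2$ on $[0,R]$, which is precisely what makes your substitution $u=d_n^2-r^2$ produce the stated $\tfrac{4\delta\omega}{3R^2}d_n^3$.
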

 
 \begin{proof}
 	See appendix \ref{proof0}.	
 \end{proof}
 

\subsection{Congestion probability and dimensioning approach}

The congestion probability, denoted by $\Pi$, is defined as the probability that the number of the total requested PRBs in the cell is greater than the available PRBs fixed by the operator. In other words, it measures the probability of failing to achieve an output number of PRBs $M$ required to guarantee a predefined quality of services:

\begin{equation}
\Pi(M,\tau)= \mathbb{P}(\Gamma \geq M).
\label{poutdef}
\end{equation}

 The following proposition gives the explicit expression of the congestion probability for a given process of users.
\begin{prop} 
	Let $\Lambda$ be a random variable such that $\Lambda=\sum_{n=1}^{N}nV_n$, with $V_n$ are Poisson random variables of intensity $w_n$. The probability that $\Lambda$ exceeds a threshold $M$ is 
	{
		\begin{align}
		\mathbb{P}(\Lambda\geq M)&= 1- \frac{1}{\pi} e^{-\sum_{n=1}^{N}w_n} \times\nonumber\\
		& \int_{0}^{\pi} e^{p_N(\theta)} \frac{\sin(\frac{M\theta}{2})}{\sin(\frac{\theta}{2})}\cos(\frac{M-1}{2}-q_N(\theta)) d\theta,
		\label{theee}
		\end{align}
	}
	where
	\[
	p_N(\theta)=\sum_{n=1}^{N}w_n \cos(n\theta)\text{ and } \\
	q_N(\theta)=\sum_{n=1}^{N}w_n \sin(n\theta).\\
	\]
	\label{theorem1}
\end{prop}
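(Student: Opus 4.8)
The plan is to compute the characteristic function of $\Lambda$ and invert it term by term. Since the $V_n$ are independent Poisson random variables with parameters $w_n$, the sum $\Lambda=\sum_{n=1}^{N}nV_n$ is a nonnegative integer-valued random variable whose characteristic function factorises:
\[
\varphi_\Lambda(\theta)=\mathbb{E}\!\left[e^{i\theta\Lambda}\right]=\prod_{n=1}^{N}\mathbb{E}\!\left[e^{in\theta V_n}\right]=\prod_{n=1}^{N}e^{w_n(e^{in\theta}-1)}=\exp\!\big(p_N(\theta)-W+iq_N(\theta)\big),
\]
where $W=\sum_{n=1}^{N}w_n$ and the last equality splits $\sum_{n}w_n(e^{in\theta}-1)$, via $e^{in\theta}=\cos(n\theta)+i\sin(n\theta)$, into the real part $p_N(\theta)-W$ and the imaginary part $q_N(\theta)$.

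Next I would write the tail as the complement of a finite sum of atoms, $\mathbb{P}(\Lambda\geq M)=1-\sum_{k=0}^{M-1}\mathbb{P}(\Lambda=k)$, and represent each atom by the Fourier inversion formula on the lattice $\mathbb{Z}$, namely $\mathbb{P}(\Lambda=k)=\frac{1}{2\pi}\int_{-\pi}^{\pi}\varphi_\Lambda(\theta)e^{-ik\theta}\,d\theta$. Exchanging the finite sum with the integral and summing the geometric progression yields the shifted Dirichlet kernel
\[
\sum_{k=0}^{M-1}e^{-ik\theta}=e^{-i(M-1)\theta/2}\,\frac{\sin(M\theta/2)}{\sin(\theta/2)},
\]
which has the removable value $M$ at $\theta=0$, so the integrand that results is bounded and no convergence issue arises. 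Substituting the expression for $\varphi_\Lambda$ then gives
\[
\mathbb{P}(\Lambda\geq M)=1-\frac{e^{-W}}{2\pi}\int_{-\pi}^{\pi}e^{p_N(\theta)}\,\frac{\sin(M\theta/2)}{\sin(\theta/2)}\,e^{\,i\left(q_N(\theta)-\frac{(M-1)\theta}{2}\right)}d\theta.
\]

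To conclude, I would note that the left-hand side is real, so only the real part of the integrand contributes and $e^{\,i(\cdot)}$ is replaced by $\cos\!\big(\frac{(M-1)\theta}{2}-q_N(\theta)\big)$. A parity argument then folds the integral onto $[0,\pi]$: the functions $p_N$ and $\theta\mapsto\sin(M\theta/2)/\sin(\theta/2)$ are even, $q_N$ is odd, hence (using $\cos(-x)=\cos(x)$) the full real integrand is even in $\theta$, and $\int_{-\pi}^{\pi}=2\int_{0}^{\pi}$ produces exactly $(\ref{theee})$. I expect the steps requiring the most care to be the separation of $\varphi_\Lambda$ into its real and imaginary parts in terms of $p_N$ and $q_N$, the identification of the Dirichlet kernel together with its continuity at the origin, and the even/odd bookkeeping in the final folding; the interchange of the finite sum with the integral is immediate, and there are no integrability subtleties since $\Lambda$ is a finite sum of Poisson variables with all moments finite.
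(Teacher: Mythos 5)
Your proof is correct and follows essentially the same route as the paper: the paper works with the probability generating function $f(z)=\mathbb{E}(z^{\Lambda})$ and Cauchy's integral formula on the unit circle parametrized by $z=e^{i\theta}$, which is exactly your lattice Fourier inversion of the characteristic function. You additionally carry out the Dirichlet-kernel summation, the real-part extraction and the parity folding onto $[0,\pi]$ explicitly — steps the paper dismisses as ``easily calculated'' — and in doing so you obtain the correct argument $\frac{(M-1)\theta}{2}-q_N(\theta)$ inside the cosine, where the paper's displayed formula drops the factor $\theta$.
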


\begin{proof}
	See appendix \ref{proof1}.	
\end{proof}

This formula is valid for every process of user distribution including the spatial PPP which represents here the distribution of indoor users. The congestion probability $\mathbb{P}(\tilde{\mathcal{F}}\geq M)$ in this case can be explicitly determined by taking $w_n=\tilde{\mu}_n$ and using $\sum_{n=1}^{N}\tilde{\mu_n}= \kappa \pi R^2$ in (\ref{theee}). Similarly, for outdoor users modeled by Cox point process conditionally on the PLP $\phi$, proposition \ref{theorem1} remains valid with $w_n=\mu_n(Y)$ and $\sum_{n=1}^{N}\mu_n(Y)=\alpha_N(Y)$. The explicit expression of the congestion probability $\mathbb{P}(\mathcal{F}\geq M)$ in this case is calculated by averaging over the PLP $\phi$.\\

Moreover, from the superposition theorem of Poisson process, the congestion probability considering the combination of outdoor and indoor users is calculated by applying proposition \ref{theorem1} to the random variable $\Gamma= \sum_{n=1}^{N}nV_n$, with $V_n=X_n+\tilde{X}_n$ is a Poisson random variable having a parameter $w_n=\mu_n(Y) + \tilde{\mu}_n$.\\

The congestion probability expressions above can be developed even further by introducing a mathematical tool from combinatorial analysis called the exponential Bell polynomials \cite{roman1984exponential} and \cite{mihoubi2008bell}. This tool is widely used for the evaluation of integrals and alternating sums. In appendix \ref{bellreminder}, we introduce some key results of Bell Polynomials. \\

The following proposition gives the expression of the congestion probability as a function of the exponential complete Bell Polynomials.

\begin{prop} 
	Let $\Lambda$ be a random variable such that $\Lambda=\sum_{n=1}^{N}nV_n$, with $V_n$ are Poisson random variables of intensity $w_n$. Let $x_j$ be defined as
	
	\[
	x_j=\left\{
	\begin{array}{ll}
	w_j j! \ \ \ \ \ \mbox{if $1\leq j \leq N$,}  \\ 
	
	\\ 
	
	0  \ \ \ \ \  \ \ \ \ \ \ \ \ \ \ \ \ \ \ \ \mbox{otherwise.} \\
	
	\\
	\end{array}
	\right.
	\]

	 The probability that $\Lambda$ exceeds a threshold $M$ can be expressed as a function of the exponential complete Bell polynomials by 
	
		\begin{equation}
		\mathbb{P}(\Lambda \geq M)=1- H \sum_{k=0}^{M-1}\frac{B_k(x_1,...,x_k)}{k!} 
		\label{theo2}
		\end{equation}
		with $H=e^{-\sum_{n=1}^{N}w_n}$.
	\label{theorem2}
\end{prop}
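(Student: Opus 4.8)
The plan is to compute the probability generating function (PGF) of $\Lambda$ explicitly, identify its power-series coefficients with the exponential complete Bell polynomials, and then read off $\mathbb{P}(\Lambda\geq M)$ as a finite complementary sum. First I would write $\mathbb{P}(\Lambda\geq M)=1-\sum_{k=0}^{M-1}\mathbb{P}(\Lambda=k)$, so that the whole task reduces to showing $\mathbb{P}(\Lambda=k)=H\,B_k(x_1,\dots,x_k)/k!$ for every integer $k\geq 0$.

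Next I would use the independence of the $V_n$ together with the Poisson PGF $\mathbb{E}[z^{V_n}]=e^{w_n(z-1)}$ to get
\begin{equation}
G_\Lambda(z):=\mathbb{E}\!\left[z^{\Lambda}\right]=\prod_{n=1}^{N}\mathbb{E}\!\left[z^{nV_n}\right]=\prod_{n=1}^{N}e^{w_n(z^n-1)}=e^{-\sum_{n=1}^{N}w_n}\,\exp\!\left(\sum_{n=1}^{N}w_n z^n\right)=H\exp\!\left(\sum_{n=1}^{N}w_n z^n\right),
\end{equation}
which is an entire function of $z$, so its Taylor coefficients at $0$ are exactly the point probabilities: $\mathbb{P}(\Lambda=k)=H\,[z^k]\exp\!\big(\sum_{n=1}^{N}w_n z^n\big)$.

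Then I would invoke the defining exponential generating-function identity for the complete Bell polynomials recalled in Appendix~\ref{bellreminder}, namely $\exp\!\big(\sum_{j\geq 1}\frac{x_j}{j!}t^j\big)=\sum_{k\geq 0}\frac{B_k(x_1,\dots,x_k)}{k!}t^k$. With the stated choice $x_j=w_j\,j!$ for $1\leq j\leq N$ and $x_j=0$ otherwise, the inner exponent becomes $\sum_{j\geq 1}\frac{x_j}{j!}z^j=\sum_{n=1}^{N}w_n z^n$, so matching coefficients gives $[z^k]\exp\!\big(\sum_{n=1}^{N}w_n z^n\big)=B_k(x_1,\dots,x_k)/k!$. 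Substituting back yields $\mathbb{P}(\Lambda=k)=H\,B_k(x_1,\dots,x_k)/k!$, and summing from $k=0$ to $M-1$ produces \eqref{theo2}.

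I do not expect a genuine obstacle here; the only points requiring a little care are (i) justifying the interchange of product and exponential and the identification of the PGF's power-series coefficients with the probabilities $\mathbb{P}(\Lambda=k)$ (immediate since $\Lambda$ is a finite non-negative integer combination of Poisson variables, so $G_\Lambda$ is entire), and (ii) checking that the truncation $x_j=0$ for $j>N$ makes the Bell-polynomial generating series coincide exactly with $\sum_{n=1}^{N}w_n z^n$ rather than an infinite sum. Both are routine once the PGF is in hand.
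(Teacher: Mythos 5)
Your proof is correct and follows essentially the same route as the paper: the paper likewise computes the probability generating function $f(z)=H\exp\bigl(\sum_{n=1}^{N}w_n z^n\bigr)$ (derived in its Appendix B), rewrites the exponent as $\sum_{j}x_j z^j/j!$ with $x_j=w_j\,j!$, identifies the Taylor coefficients with $B_k(x_1,\dots,x_k)/k!$ via the Bell-polynomial generating function, and concludes through the complementary sum. The only difference is presentational: you spell out the independence/entirety justifications that the paper leaves implicit.
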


\begin{proof}
	See appendix \ref{proof2}.	
\end{proof}
 
Now, to derive the expression of the congestion probability, we apply proposition \ref{theorem2} to the random variable $\Gamma$ defined in (\ref{prbtot}) as the superposition of two independent discrete random variables $\mathcal{F}$ and $\tilde{\mathcal{F}}$. $\Gamma$ can be written as
\begin{equation}
\Gamma=\sum_{n=1}^{N} n V_n,
\end{equation} 

with $V_n=X_n+\tilde{X}_n$ is a Poisson random variable of parameter $w_n=\mu_n(Y)+\tilde{\mu}_n$. Hence, by using proposition \ref{theorem2}, the congestion probability conditionally on $\phi$ (PLP) can be expressed as

\begin{equation}
P(\Gamma \geq M | \phi)= 1-H \sum_{k=0}^{M-1}\frac{B_k(x_1,...,x_k)}{k!},
\end{equation}
 
 where $x_j=(\mu_j(Y)+\tilde{\mu}_j)/j!$ and $H=e^{(-\alpha_N(Y) - \kappa \pi R^2)}$.\\

Once again, the final expression of the congestion probability is calculated by averaging over the PLP $\phi$ as 

\begin{equation}
\Pi(M,\tau) = \mathbb{E}_\phi[\mathbb{P}(\Gamma \geq M|\phi)].
\label{averaging}
\end{equation}

 Once we have the expression of the congestion probability, we set a target value $\Pi^*$ and then, the required number of PRBs $M$ is written as a function of $\tau$ through the implicit equation $\Pi(M,\tau)=\Pi^*$. The output $M$ of the implicit function constitutes the result of the dimensioning process.

\section{Numerical results}

For numerical purpose, we consider a cell of radius $R=0.7km$ with a transmit power level $P=60dBm$ (corresponds to $43dBm$ from the transmitter power amplifier and $17dBm$ for the antenna gain of the transmitter) and operating in a bandwidth of $20MHz$. The downlink thermal noise power including the receiver noise figure is calculated for $20MHz$ and set to $\sigma^2=-93dBm$. The propagation parameter is set to $a=130dB$ for outdoor users and to $\tilde{a}=166dB$ for indoor users. The path loss exponent is considered to be $2b=3.5$. We assume also that we have 8Tx antennas in the BS and 2Rx antennas in users' terminals. So, the number of possible transmission layers is at most 2.\\ 

\begin{figure}[tb]
	\centering
	\includegraphics[height=6.5cm,width=9.4cm]{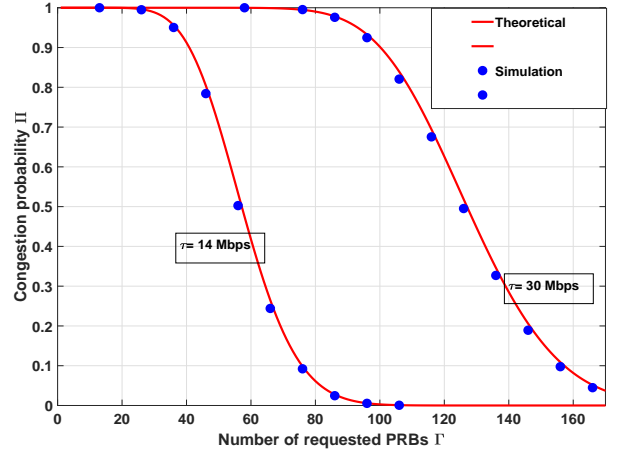}
	\caption{Congestion probability theoretical vs simulation for two values of $\tau$}
	\label{ths}
\end{figure}

In Fig. \ref{ths}, we simulate the described model in MATLAB for two values of cell throughput $\tau=14Mbps$ and $\tau=30Mbps$. We notice that the explicit expression of the congestion probability fits the empirical one obtained by using Monte-Carlo simulations. Moreover, it is obvious that an increase in cell throughput $\tau$ generates an increase of the congestion probability because $\tau$ is related to the number of users in the cell and depends on 3 intensities: outdoor users' intensity $\delta$, roads' intensity $\lambda$ and indoor users' intensity $\kappa$. When those intensities increase, the number of the required PRBs by users in the cell coverage area increases, thus the system experiences high congestion. An other important factor that can impact system performance is the path loss exponent. The variations of this parameter has tremendous effect on the congestion probability: when $2b$ goes up, radio conditions become worse and consequently the number of demanded PRBs to guarantee the required QoS increases.\\

 \begin{figure}[tb]
 	\centering
 	\includegraphics[height=6.5cm,width=9.4cm]{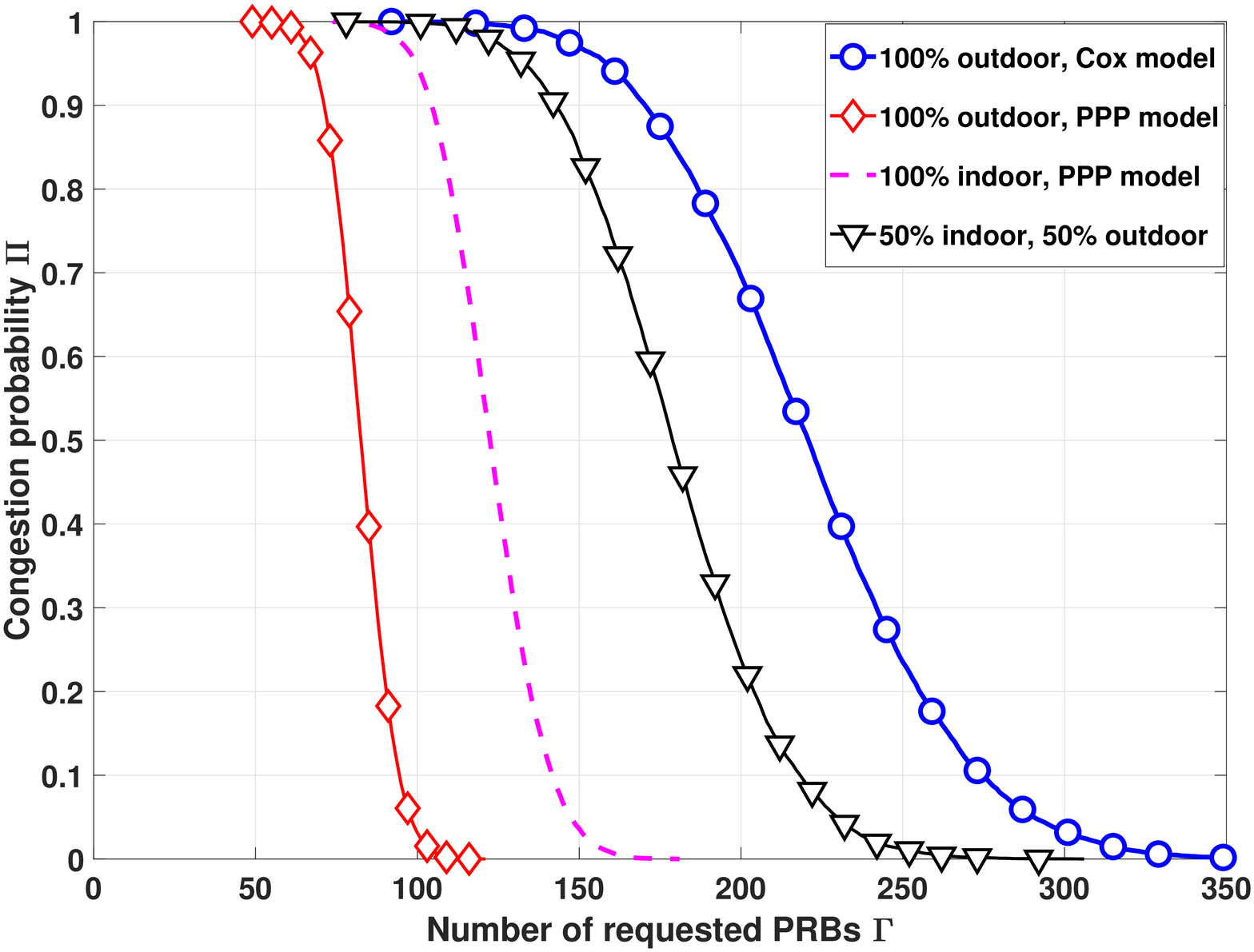}
 	\caption{Comparison between different user distributions}
 	\label{outdoorvsindoor}
 \end{figure}

To see how the random distribution of users impacts performance, we firstly depict in Fig. \ref{outdoorvsindoor} the congestion probability, considering only outdoor users in a random system of roads according to Cox process with roads intensity $\lambda=9km/km^2$ and users' intensity $\delta=6 users/km$, and secondly we compare it to the congestion probability of a spatial PPP outdoor users model with an equivalent intensity of $\lambda \delta=54 users/km^2$. We observe that the number of requested PRBs by users is always higher, for every target value of the congestion probability, when users are modeled by Cox process driven by PLP. In other words, even if the mean number of users in the cell is the same, the random tessellation of roads i.e., the geometry of the area covered by the cell has a significant impact on performance. Also, one can notice that if we consider a Cox model with high roads intensity, users appear to be distributed every where in the cell as in spatial PPP model with higher intensity. In this case, Cox process driven by PLP can be approximated by a spatial PPP.\\

Also in Fig. \ref{outdoorvsindoor}, we compare the congestion probability of indoor users modeled according to a spatial PPP and the one of outdoor users modeled according to a spatial PPP having the same intensity. We notice that indoor users required more PRBs than outdoor users and this comes from the difference between outdoor and indoor environment. Actually, signal propagation in indoor environment suffers from high attenuation and delay factors because of the presence of obstacles such as buildings and walls. Hence, indoor users always experience high path loss and bad performance in terms of SINR, which means that they need always more PRBs than outdoor users to achieve a required transmission rate.

%


\begin{figure}[tb]
	\centering
	\includegraphics[height=6.5cm,width=9.4cm]{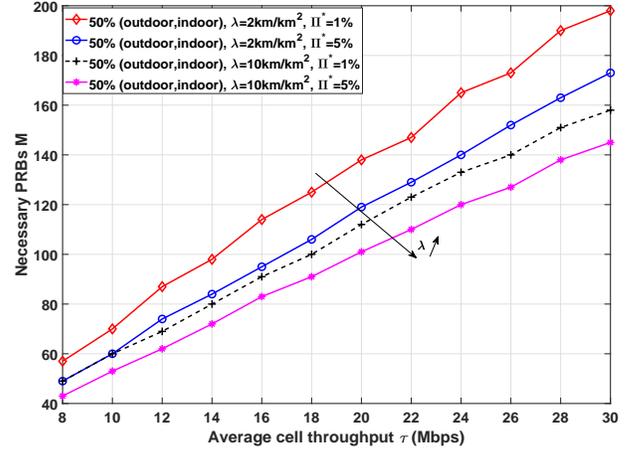}
	\caption{Required PRB $M$ as a function of cell Throughput $\tau$, for fixed transmission rate $C^*=500 kbps$.}
	\label{ths2}
\end{figure}

During resource dimensioning process, the operator starts by defining a target congestion probability that can be tolerated for a given service. For different traffic forecasts, the number of PRBs is set to ensure that the congestion probability never exceeds its target value. Fig. \ref{ths2} shows the number of required PRBs that the operator should make it available, when the expected cell throughput is known, for two target values of the congestion probability ($\Pi^*=1\%$ and $\Pi^*=5\%$) and for two road intensities ($\lambda=2km/km^2$ and $\lambda=10km/km^2$) with a fixed transmission rate of $500kbps$. We can observe that for each forecast cell throughput value, the threshold number of resources required in the cell decreases when road intensity increases. For instance, when $\lambda$ increases from $2km/km^2$ to $10km/km^2$ (i.e., from 9 expected roads to 44), the number of the dimensioned PRBs decreases by $32$, for the same cell throughput value $\tau=25 Mbps$. Also, for a given value of $\tau$, we can notice from (\ref{cellth}) that the user intensity on roads $\delta$ is inversely proportional to roads' intensity $\lambda$. Thus for fixed $\tau$, if $\lambda$ increases, $\delta$ decreases and consequently the number of required PRBs decreases.\\

\begin{figure}[tb]
	\centering
	\includegraphics[height=6.5cm,width=9.4cm]{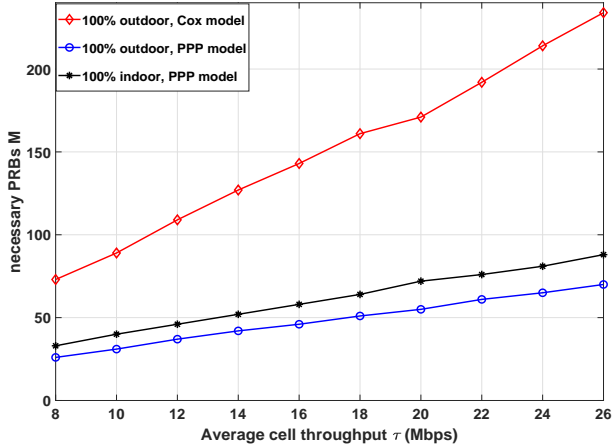}
	\caption{Dimensioned PRBs comparison: outdoor users with Cox model, outdoor users with spatial PPP and indoor users with spatial PPP, for a fixed transmission rate $C^*=500kbps$}
	\label{ths3}
\end{figure}

 Moreover, in Fig. \ref{ths3} we compare the dimensioning results for 3 models: Outdoor users according to Cox process driven by PLP, outdoor users according to a spatial PPP model and indoor users with spatial PPP model (having the same intensities). We notice that the number of dimensioned PRBs for outdoor users is always higher when users are modeled according to Cox process driven by PLP than spatial PPP model. Also, we can see that indoor users need more PRBs than outdoor users (when the both are modeled by the spatial PPP) which is in agreement with the previous results. Besides, we have mentioned previously that when $\lambda$ is very high, the distribution of users becomes similar to the one of a spatial PPP. Thus, with a spatial PPP model, one can have small values of the dimensioned PRBs, which is optimistic compared to the real geometry of the area covered by a cell in dense urban environment, where more PRBs are required to guarantee the desired quality of services.\\

To see interference impact on the dimensioning process, we divide, as we have mentioned previously, the cell into 3 regions: cell center with a radius of R/3, cell middle represented by the ring between R/3 and 2R/3 and cell edge characterized by a distance from the BS that exceeds 2R/3. Each region of the cell experiences a given level of interference evaluated in terms of IM (Interference Margin or Noise Rise). Cell edge users always experience high interference level and IM is set to be $15dB$. In cell middle we consider an interference margin of $8dB$, whereas in the cell center where users perceive good radio conditions, the interference margin is set to $IM=1dB$.\\

\begin{figure}[tb]
	\centering
	\includegraphics[height=6.5cm,width=9.4cm]{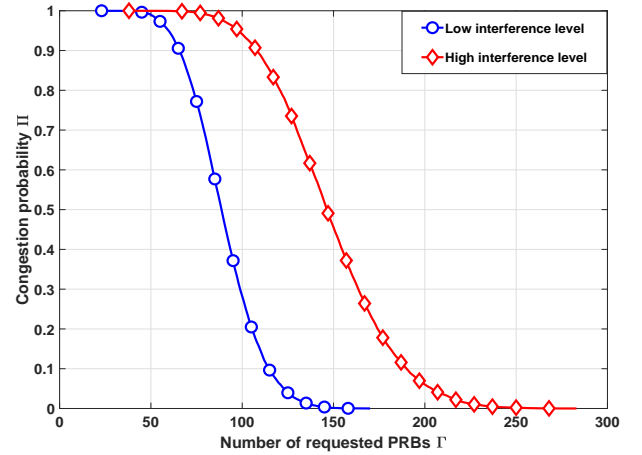}
	\caption{Interference impact ($\tau=30Mbps$).}
	\label{interference1}
\end{figure}

\begin{figure}[tb]
	\centering
	\includegraphics[height=6.5cm,width=9.4cm]{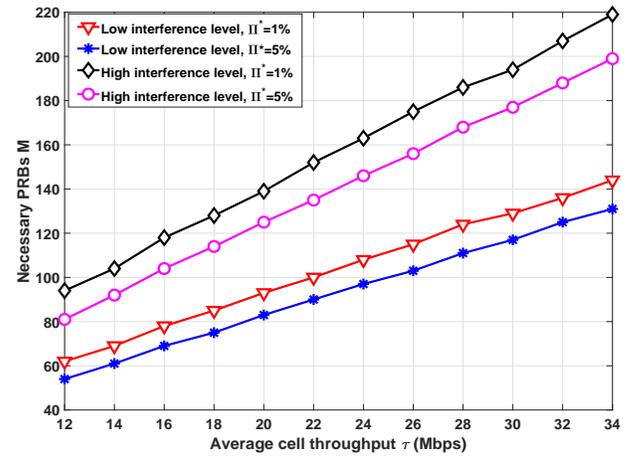}
	\caption{Interference impact on dimensioned PRBs M.}
	\label{ths4}
\end{figure}

Fig. \ref{interference1} shows the congestion probability in a noise-limited scenario (Interference level is neglected) and its comparison with the one where interference is taken in consideration as we have described above. We consider a scenario with $50\%$ of outdoor users modeled according to Cox process driven by the PLP and $50\%$ of indoor users modeled according to a spatial PPP, with an average cell throughput of 30Mbps and a fixed transmission rate of 500kbps. As expected, interference has a tremendous impact on the number of required PRBs. For instance, when the target congestion probability is set to $5\%$, the number of required PRBs increases by almost 80 because of the presence of interference. Similarly in Fig. \ref{ths4}, we plot the dimensioning curves i.e., the threshold number of PRBs in the cell as a function of the forecast average cell throughput, for a noise-limited environment and an environment with interference. As we can see, the number of PRBs that the operator should make it available is higher when interference impact is considered. For instance, for a forecast average cell throughput of 26Mbps and a target QoS $\Pi^*=5\%$, the number of dimensioned PRBs increases by almost 50 PRBs when the three interference margins are considered.\\ 

Besides, interference level varies from one location to another in the same cell. Practically cell edge users experience high interference level compared to users that are close to the BS in the cell middle or cell center. Fig. \ref{ths5} shows a comparison between resource dimensioning results for the three regions of the cell: cell center, cell middle and cell edge. As we can observe, the high demand on PRBs comes especially from cell edge users that perceive bad radio conditions because of the far distance from the BS and the presence of interference. Hence, for a predicted average cell throughput, the number of dimensioned PRBs should be set always by considering a probable presence of traffic hotspots at the cell edge.\\

\begin{figure}[tb]
	\centering
	\includegraphics[height=6.5cm,width=9.4cm]{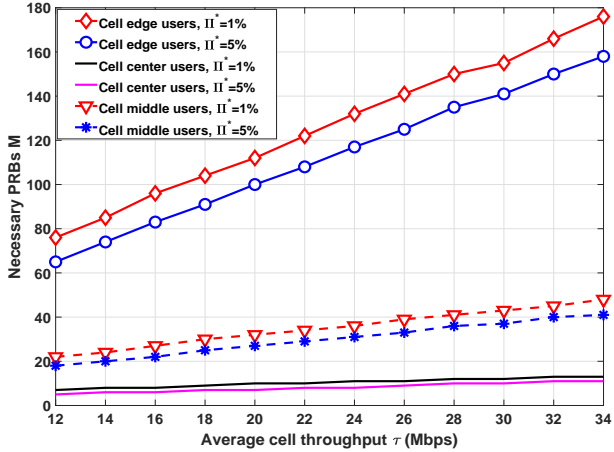}
	\caption{Comparison between dimensioned PRBs M for cell edge, cell middle and cell center users.}
	\label{ths5}
\end{figure}

Dimensioning phase is very important because it gives the operators a vision on how they should manage the available spectrum. If the dimensioned number of resources exceeds the available one, the operator can for instance:
\begin{itemize}
	\item aggregate fragmented spectrum resources into a single wider band in order to increase the available PRBs,
	\item activate capacity improvement features like carrier aggregation or dual connectivity between 5G and legacy 4G networks in order to delay investment on the acquisition of new spectrum bands,
	\item change the TDD (Time Division Duplexing), configuration to relieve the congested link,
	\item or even buy new spectrum bands.
\end{itemize}

\section{Conclusions}

In this paper, we have presented a resource dimensioning model for OFDM based systems that can be applied also for scalable OFDM based 5G NR interface. We have considered two spatial random distributions in order to distinguish between outdoor users distributed along a random system of roads in a typical cell coverage area (Cox Point Process driven by PLP) and indoor users distributed in buildings according to the widely used spatial PPP. The comparison between the two spatial distributions showed that results are more optimistic when spatial PPP is used. Also, we have shown that the geometry of the area covered by a cell can impact the results. Moreover, we have derived an analytical model to qualify the number of required PRBs in a typical cell with two explicit formulas of the congestion probability. Also, we have established an implicit relationship between the required resources and the forecast traffic given a target congestion probability. This relationship translates the dimensioning problem that an operator can perform to look for the amount of necessary spectrum resources to satisfy a predefined QoS. Finally, a comparison between an interfered environment and a noise-limited one has been provided. Besides, we have shown that the high requirement in terms of radio resources comes from cell edge users that perceive bad radio conditions.


\appendices
\section{Proof of proposition \ref{meangamma}}
\label{proof0}

Let $\Gamma$ be defined as in (\ref{prbtot}). $X_n$ and $\tilde{X}_{n}$ are two Poisson random variables with parameters $\mu_n(Y)$ and $\tilde{\mu}_n$. The random variable $X_n$ is dependent on the PLP $\phi$ i.e., depends on $Y$.  Hence, the mathematical expectation of $\Gamma$ can be written as
{
	
    \begin{align}
    \mathbb{E}(\Gamma)&= \mathbb{E}_{\phi}(\Gamma|\phi) \nonumber\\
       &=\sum_{n=1}^{N}n \mathbb{E}_{\phi}(\mu_n(Y)) + \sum_{n=1}^{N} n \tilde{\mu}_n 
     \label{st1}
    \end{align}

}

To evaluate equation (\ref{st1}), we need to calculate first the mathematical expectation of $\mu_n(Y)$. Let $\omega=2 \pi \lambda R$ be the mathematical expectation of the Poisson random variable $Y$. $\mathbb{E}_{\phi}(\mu_n(Y))$ can be expressed as  

{
	\begin{align}
      \mathbb{E}_{\phi}(\mu_n(Y))&= 2 \delta \sum_{k=1}^{+\infty} \frac{\omega^k e^{-\omega}}{k!} \sum_{j=1}^{k}\mathbb{E}_{r_j}\left[\mathds{1}_{(d_n>r_j)}\sqrt{d_n^2-r_j^2}\right]- \nonumber\\
      &\mathbb{E}_{r_j}\left[\mathds{1}_{(d_{n-1}>r_j)}\sqrt{d_{n-1}^2-r_j^2}\right]
    \label{st2} 
	\end{align}
	
}

$\{r_j\}$ follow a uniform distribution in the disk of radius $R$ representing the whole cell coverage area. Thus
\begin{equation}
\mathbb{E}_{r_j}\left[\mathds{1}_{(d_n>r_j)}\sqrt{d_n^2-r_j^2}\right]=\frac{2}{R^2}\int_{0}^{d_n}\sqrt{d_n^2-r^2}rdr.
\end{equation}

Finally, by using a change of variable $x=r^2$ and the expression of $\tilde{\mu}_n$, we get the result of proposition \ref{meangamma}, which completes the proof.
  
\section{Proof of proposition \ref{theorem1}}
\label{proof1}

To prove proposition \ref{theorem1}, we calculate at first the moment generating function (i.e., Z-Transform) $f(z)$ of the discrete random variable $\Lambda$.

{
	\begin{align}
	f(z)=\mathbb{E}(z^{\Lambda}) &= \sum_{k=0}^{+\infty} z^k \mathbb{P}(\Lambda =k)\nonumber\\
	&=  \prod_{n=1}^{N}\sum_{k=0}^{+\infty} z^{nk}\mathbb{P}(V_n=k). 
	\label{p1} 
	\end{align}
}
Since $V_n$ is a Poisson random variable with parameter $w_n$, (\ref{p1}) is simplified to 
\begin{equation}		
f(z)=e^{-\sum_{n=1}^{N}w_n}e^{\sum_{n=1}^{N}z^n w_n},
\label{p3} 
\end{equation}

It is obvious that $f$ is analytic on $\mathbb{C}$ and in particular inside the unit circle $\mathcal{C}$. Cauchy's integral formula gives then the coefficients of the expansion of $f$ in the neighborhood of $z=0$:
\begin{equation}		
\mathbb{P}(\Lambda =k)= \frac{1}{2 \pi i}\int_{\mathcal{C}}\frac{f(z)}{z^{k+1}}dz.
\label{cauchy} 
\end{equation}
In (\ref{cauchy}), replacing $f$ by its expression (\ref{p3}) and parameterizing $z$ by $e^{i\theta}$ lead to
\begin{equation}		
\mathbb{P}(\Lambda =k)= \frac{1}{2 \pi} e^{-\sum_{n=1}^{N}w_n} \int_{0}^{2 \pi} \frac{e^{\sum_{n=1}^{N}w_n e^{i n \theta}}}{e^{i k \theta}}d\theta.
\label{p5} 
\end{equation}

Since the congestion probability is defined by the CCDF (Complementary Cumulative Distribution Function) of $\Lambda$, then
{
	\begin{align}
	&\mathbb{P}(\Lambda \geq M) =1-\sum_{k=0}^{M-1} \mathbb{P}(\Lambda=k)\nonumber \\
	&= 1-\frac{1}{2 \pi} e^{-\sum_{n=1}^{N}w_n} \int_{0}^{2 \pi} e^{\sum_{n=1}^{N}w_n e^{i n \theta}} \sum_{k=0}^{M-1} e^{-i k \theta} d\theta.
	\label{p6} 
	\end{align}
}
The sum inside the right hand integral of (\ref{p6}) can be easy calculated to get the explicit expression of (\ref{theee}) after some simplifications.

\section{Key background on the exponential Bell Polynomials}
\label{bellreminder}
Exponential Bell polynomials $B_{p}$ are obtained from their generating function 
\begin{equation}
e^{\sum_{j=1}^{+\infty}x_j \frac{t^j}{j!}}= \sum_{p=0}^{+\infty} \frac{t^p}{p!} B_{p}(x_1,x_2,...x_{p}).
\label{bellpartial}
\end{equation}
 and have the following combinatorial expression 

\begin{equation}
B_{p}(x_1,x_2,....,x_p)=\sum_{k_1+2k_2+...=p}^{} \frac{p!}{k_1! k_2!....} (\frac{x_1}{1!})^{k_1} (\frac{x_2}{2!})^{k_2}.... \nonumber
\label{bell}
\end{equation} 

Also, if we consider the matrix $A_p =(a_{i,j})_{1 \leq i,j \leq p}$ defined by

\[
\left\{
\begin{array}{ll}
a_{i,j}= {p-i \choose j-i}x_{j-i+1} \ \ \ \ \ \mbox{if $i \leq j$,}  \\ 

\\ 

a_{i,i-1}=-1  \ \ \ \ \ \ \ \ \ \ \ \ \ \mbox{if $i\geq 2$ } \\

\\

a_{i,j}=0  \ \ \ \ \ \ \ \ \ \ \ \ \  \ \ \ \ \ \mbox{if $i\geq j+2$,} \\

\\
\end{array}
\right.
\]
then 
\begin{equation}
B_p(x_1,..,x_p)= Det(A_p).
\label{bellbinomials}
\end{equation}

%
%
%

For instance, the first few Bell Polynomials are given by
{
	\begin{align}
	&B_0= 1 \nonumber \\
	&B_1(x_1)=x_1 \nonumber \\
	&B_2(x_1,x_2)=x_1^2+x_2 \nonumber \\
	&B_3(x_1,x_2,x_3)=x_1^3+3 x_1 x_2 + x_3 \nonumber \\
	&B_4(x_1,x_2,x_3,x_4)=x_1^4 + 6x_1^2x_2 + 4 x_1 x_3 + 3 x_2^2 + x_4  \nonumber \\
	&\vdots \nonumber
	\label{fewbells} 
	\end{align}
}

Bell polynomials are multi-variable Sheffer sequence and then satisfy the binomial type relation:

\begin{equation}
B_p(x_1+y_1,..,x_p+y_p)=\sum_{i=0}^{p}{{p} \choose{i}} B_{p-i}(x_1,..,x_{p-i}) B_i(y_1,..,y_i).
\label{bellbinomial}
\end{equation}

\section{Proof of proposition \ref{theorem2}}
\label{proof2}

By using the definition of $x_j$, the Z-Transform of $\Lambda$ given in appendix \ref{proof1} by equation (\ref{p3})  becomes

\begin{equation}
f(z)= H e^{\sum_{j=0}^{+ \infty}z^j\frac{x_j}{j!}},
\label{be1}
\end{equation}
with $H=e^{-\sum_{n=1}^{N}w_n} $.

The second exponential term in (\ref{be1}) can be evaluated by using the generating function of the complete Bell Polynomials given in equation (\ref{bellpartial}), it follows that

\begin{equation}
f(z)= H \sum_{p=0}^{+ \infty} \frac{z^p}{p!} B_p(x_1,...,x_p)
\label{be2}
\end{equation}

On the other hand, by using the definition of Z-Transform of $\Lambda$ and the Taylor expansion of $f(z)$ in 0, it follows that  

\begin{equation}
\mathbb{P}(\Lambda= p)= \frac{H}{p!} B_p(x_1,...,x_p).
\label{be3}
\end{equation}

Finally, from the definition of the CCDF (Complementary Cumulative Distribution Function), we get

\begin{equation}
\mathbb{P}(\Lambda \geq M)=1-\sum_{k=0}^{M-1} \mathbb{P}(\Lambda= k),
\label{be4}
\end{equation}

which completes the proof.

\balance

\bibliographystyle{IEEEtran}
\bibliography{IEEEabrv,TelecomReferences}
\end{document}